\providecommand{\U}[1]{\protect\rule{.1in}{.1in}}
\newtheorem{theorem}{Theorem}[section]
\newtheorem{corollary}{Corollary}[section]
\newtheorem{definition}{Definition}[section]
\newtheorem{example}{Example}[section]
\newtheorem{lemma}{Lemma}[section]
\newtheorem{remark}{Remark}[section]
\newenvironment{proof}[1][Proof]{\noindent\textbf{#1.} }{\hfill \rule{0.7em}{0.7em} \medskip}
\numberwithin{equation}{section}
\newcommand{\drm}{\mathrm{d}}
\newcommand{\sig}{\sigma}
\begin{document}

\title{Optimality of Excess-Loss Reinsurance \break under a Mean-Variance Criterion \break REPAIR OF SECTION 3.3}

\author{Danping Li\thanks{Department of Statistics and Actuarial Science, University
of Waterloo, Waterloo, ON, N2L 3G1, Canada (d268li@uwaterloo.ca)}
\and Dongchen Li\thanks{Department of Statistics and Actuarial Science, University
of Waterloo, Waterloo, ON, N2L 3G1, Canada (d65li@uwaterloo.ca)}
\and Virginia R. Young \thanks{Corresponding author. Department of Mathematics, University of Michigan,
Ann Arbor, MI, 48109, USA (vryoung@umich.edu)}}

\date{\small \today}

\maketitle

\begin{abstract}
In this paper, we study an insurer's reinsurance-investment problem under a mean-variance criterion. We show that excess-loss is the unique equilibrium reinsurance strategy under a spectrally negative L\'{e}vy insurance model when the reinsurance premium is computed according to the expected value premium principle. Furthermore, we obtain the explicit equilibrium reinsurance-investment strategy by solving the extended Hamilton-Jacobi-Bellman equation.

\smallskip

\textit{JEL Codes}: C730, G220.

\smallskip

\textit{Keywords}: Mean-variance criterion; Equilibrium reinsurance-investment strategy; Excess-loss reinsurance; L\'{e}vy insurance model.

\end{abstract}


\baselineskip15pt

\section{Introduction}

An integrated reinsurance and investment strategy is commonly employed by an insurer (cedent) to increase its underwriting capacity, stabilize underwriting results, protect itself against catastrophic losses, and achieve financial growth. The study of an insurer's optimal reinsurance-investment strategy has received considerable attention in the literature of actuarial science under a
variety of criteria, including minimizing the probability of ruin (see, for example, Promislow and Young \cite{PY05}, Zhang et al.\ \cite{ZZG07}, and Chen et al.\ \cite{CLL10}), maximizing the expected utility of terminal wealth (see, for example, Liu and Ma \cite{LM09}, Bai and Guo \cite{BG10}, Gu et al.\ \cite{GGLZ12}, and Liang and Bayraktar \cite{LB14}), and maximizing expected terminal wealth subject to a constraint on the variance, the so-called \textit{mean-variance criterion} (see, for example, B\"{a}uerle \cite{B05} and Zeng and Li \cite{ZL11}).

The mean-variance criterion is closely related to maximizing expected utility of terminal wealth. Indeed, Pratt \cite{P64} observes that the certainty equivalence for a ``small'' random gain $Y$ under expected utility theory approximately equals
\begin{equation}
\label{eq:mv}\mathbb{E}(Y) - \frac{\gamma}{2} \, \mathrm{Var}(Y),
\end{equation}
in which $\gamma$ is the absolute risk aversion of the utility maximizer. Note that maximizing \eqref{eq:mv} is precisely the mean-variance criterion. Also, under fairly general conditions, optimal insurance is deductible insurance for a risk-averse utility maximizer (see, for example, Arrow \cite{A63}, van Heervarden \cite{VH91}, and Moore and Young \cite{MY06}). Thus, when maximizing \eqref{eq:mv} (or solving a related game) with $Y$ equal to terminal wealth of an insurance company, we expect that optimal (or equilibrium) reinsurance will be deductible, or excess-loss, reinsurance, which we prove below in Theorem \ref{thm:2}.  Furthermore, because the risk aversion $\gamma$ is constant, the deductible is independent of the surplus of the insurer.

Under the mean-variance criterion, the reinsurance-investment problem is time-inconsistent in the sense that Bellman's optimality principle fails. To tackle the time inconsistency, we formulate the problem as a non-cooperative game and solve for a subgame perfect Nash equilibrium. Specifically, at every time point, the player solves for an \textit{equilibrium strategy} by treating the problem as a game against all future versions of himself. An equilibrium strategy is, thus, time-consistent. One can trace this approach to Strotz \cite{S55}, and it has recently been further developed by Bj\"{o}rk and Murgoci \cite{BM10} for a general class of objective functions in a Markovian
framework. Due to the importance of time consistency for a rational insurer, the approach has already been applied by many authors to solve for equilibrium strategies in the literature of reinsurance-investment problems (see, for example, Zeng et al.\ \cite{ZLL13} and Lin and Qian \cite{LQ15}).

Two types of reinsurance policies are most commonly studied in the literature on equilibrium reinsurance and investment under a mean-variance criterion: (1) proportional (quota-share) reinsurance (see, for example, Zeng and Li \cite{ZL11}, Shen and Zeng \cite{SZ15}, and the two references given at the end of the previous paragraph) and (2) excess-loss reinsurance (see, for example, Li et al.\ \cite{LRZ16}).  Given the rich literature, one question naturally arises which has not received much attention: \textit{Which reinsurance policy yields an equilibrium for an insurer under a mean-variance criterion among \textrm{all} reasonable reinsurance policies}? We show that buying excess-loss reinsurance is the unique equilibrium strategy under this criterion.


We model the insurer's basic surplus process, that is, the surplus process without any reinsurance-investment strategy, by a \textit{spectrally negative L\'{e}vy process}. The model is widely employed in the context of risk theory in the actuarial literature (see, for example, Yang and Zhang \cite{YZ01}, Chiu and Yin \cite{CY05}, Avram et al.\ \cite{APP07}, and Landriault et al.\ \cite{LRZ11}). It is a generalization of many insurance models studied in the context of reinsurance-investment problems, including the Brownian motion model (see, for example, Promislow and Young \cite{PY05}), the classical Cram\'{e}r-Lundberg model (see, for example, Zeng et al.\ \cite{ZLG16}), and the jump-diffusion model (see, for example, Zeng et al.\ \cite{ZLL13}).

We prove that, when the reinsurance premium is computed according to the expected value premium principle, excess-loss reinsurance is the unique equilibrium strategy for a time-consistent insurer under a mean-variance criterion. As mentioned above, this result is consistent with several in the literature; specifically, under the expected value premium principle and various objective
functions, excess-loss (re)insurance is optimal, including when maximizing the expected utility of terminal wealth (see, for example, Liang and Guo \cite{LG11}, Zeng and Luo \cite{ZL13}) and when minimizing the probability of ruin (see, for example, Zhang et al.\ \cite{ZZG07}, Meng and Zhang \cite{MZ10}, Bai et al.\ \cite{BCZ13}, and Zhou and Cai \cite{ZC14}).

The remainder of this paper is organized as follows. In Section 2, we formulate our model and define the equilibrium problem faced by the insurer.  In Section 3, we prove that excess-loss reinsurance is the unique equilibrium strategy, and we obtain explicit expressions for the equilibrium reinsurance-investment strategy and the corresponding equilibrium value function. We also discuss two problems closely related to the mean-variance criterion: (1) maximizing expected exponential utility of terminal wealth, and (2) maximizing the time-0 mean-variance criterion with commitment. In Section 4, we present some numerical examples to illustrate our findings, and Section 5 concludes the paper.

\section{Model formulation}

Let $\left(  \Omega, \mathcal{F}, \boldsymbol{F=} \left\{  \mathcal{F}_{t}\right\}_{t\geq0},\mathbb{P} \right)  $ be a filtered, complete probability space satisfying the usual conditions, and let $T > 0$ be a finite time horizon. Consider an insurer's basic surplus process modeled by a spectrally negative L\'{e}vy process defined on this probability space with dynamics
\[
\mathrm{d} U_{t} = c \, \mathrm{d} t + \sigma_{1} \, \mathrm{d} B_{t}^{(1)} - \int_{0}^{\infty}z \, N(\mathrm{d} z, \mathrm{d} t), \quad U_{0} > 0,
\]
in which $c > 0$ is the premium rate, $\sigma_{1} > 0$ is the volatility rate, $\big\{  B_{t}^{(1)} \big\}_{t \ge0}$ is an $\boldsymbol{F}$-adapted standard Brownian motion, and $N(\mathrm{d} z,\mathrm{d} t)$ is a Poisson random measure representing the number of insurance claims of size $(z, z + \mathrm{d} z)$ within the time period $(t, t + \mathrm{d} t)$. $B^{(1)}$ and $N$ are independent. For more information about L\'evy processes, please see Kyprianou \cite{K06}.

Denote the compensated measure of $N(\mathrm{d} z, \mathrm{d} t)$ by $\tilde{N}(\mathrm{d} z,\mathrm{d} t) = N(\mathrm{d} z,\mathrm{d} t) - \nu(\mathrm{d} z) \mathrm{d} t,$ in which $\nu$ is a L\'{e}vy measure such that $\int_{0}^{\infty}z \, \nu(\mathrm{d} z) < \infty$; $\nu(\mathrm{d} z)$ represents the expected number of insurance claims of size $(z, z + \mathrm{d} z)$ within a unit time interval. The insurer's premium $c$ is determined under the expected value principle, that is, $c = (1 + \theta) \int_{0}^{\infty}z \, \nu(\mathrm{d} z)$, in which $\theta> 0$ is the proportional safety loading of the insurer.

The insurer manages its insurance liabilities by purchasing a reinsurance policy (strategy) with retained claim $\left\{  \ell_{t} \right\}_{t \in [0, T]}$, with the only restriction $0 \le\ell_{t} \le Z_{t}$ when the claim equals $Z_{t}$ at time $t \in[0, T]$. Note that the reinsurer covers the excess loss $Z_{t} - \ell_{t}$. We will look for a reinsurance strategy given in feedback form by $\ell_{t} = \ell(Z_{t}, t)$, in which we slightly abuse notation by using $\ell$ on both sides of this equation. Technically, we should assume \textit{a priori} that the retention strategy depends on surplus, but in Theorem \ref{thm:2} below, we will find the equilibrium retention that is independent of the surplus. Thus, for simplicity, we omit $\ell$'s possible dependency on the surplus.

The time-$t$ premium rate of the reinsurance policy is given by
\[
(1 + \eta) \int_{0}^{\infty}(z - \ell(z, t)) \, \nu(\mathrm{d} z),
\]
determined again under the expected value principle, in which $\eta$ is the reinsurer's proportional safety loading. It is commonly assumed in the literature that $\eta> \theta$, indicating that a reinsurance policy is more expensive than the primary insurance, and by using this assumption, one generally avoids trivial results. Under the retention $\ell$, the dynamics of the surplus process is governed by
\begin{align*}
\mathrm{d} R_{t}  &  = \mathrm{d} U_{t} - (1 + \eta) \int_{0}^{\infty}[z - \ell(z, t)] \, \nu(\mathrm{d} z) \, \mathrm{d} t + \int_{0}^{\infty}[z -
\ell(z, t)] \, N(\mathrm{d} z, \mathrm{d} t)\\
&  = (1 + \theta) \int_{0}^{\infty}z \, \nu(\mathrm{d} z) \, \mathrm{d} t + \sigma_{1} \, \mathrm{d} B_{t}^{(1)} - (1 + \eta) \int_{0}^{\infty}[z -
\ell(z, t)] \, \nu(\mathrm{d} z) \, \mathrm{d} t\\
&  \quad- \int_{0}^{\infty}z \, N(\mathrm{d} z, \mathrm{d} t) + \int_{0}^{\infty}[z - \ell(z, t)] \, N(\mathrm{d} z, \mathrm{d} t)\\
&  = \int_{0}^{\infty}\left(  (\theta- \eta) z + \eta\ell(z, t) \right) \nu(\mathrm{d} z) \, \mathrm{d} t + \sigma_{1} \, \mathrm{d} B_{t}^{(1)} -
\int_{0}^{\infty}\ell(z, t) \, \tilde{N}(\mathrm{d} z, \mathrm{d} t).
\end{align*}

Furthermore, suppose the insurer invests in a financial market consisting of a risk-free asset with a constant interest rate $r > 0$ and a risky asset governed by a geometric Brownian motion with dynamics
\[
\mathrm{d}S_{t} = \mu\,S_{t}\,\mathrm{d}t + \sigma_{2}\,S_{t}\left( \rho\,\mathrm{d}B_{t}^{(1)} + \sqrt{1-\rho^{2}}\,\mathrm{d}B_{t}^{(2)}\right),\quad S_{0} > 0,
\]
in which $\mu>r$, $\sigma_{2}>0$, $\rho\in(-1,1)$, and $\big\{B_{t}^{(2)}\big\}_{t\geq0}$ is an $\boldsymbol{F}$-adapted standard Brownian motion, independent of both $B^{(1)}$ and $N$. Let $\pi_{t}$ denote the dollar amount of surplus invested in the risky asset at time $t$, and let $\left\{ X_{t}^{u}\right\}_{t\in\lbrack0,T]}$ denote the corresponding insurance surplus process under a \textit{reinsurance-investment strategy} $u:=\left( \ell(Z_{t},t),\pi_{t}\right)_{t\in\lbrack0,T]}$. The dynamics of the surplus process $\left\{  X_{t}^{u}\right\}_{t\in\lbrack0,T]}$ is, then, given by
\begin{align}\label{SDE3}
\mathrm{d}X_{t}^{u} &  =\pi_{t}\,\frac{\mathrm{d}S_{t}}{S_{t}}+\left( X_{t}^{u}-\pi_{t}\right)  r\,\mathrm{d}t+\mathrm{d}R_{t} \nonumber \\
&  =\left[  rX_{t}^{u}+(\mu-r)\pi_{t}+\int_{0}^{\infty}\left(  (\theta - \eta)z + \eta\ell(z,t)\right)  \nu(\mathrm{d}z)\right]  \mathrm{d}t\nonumber\\
&  \quad + \sqrt{\sigma_{1}^{2}+2\rho\sigma_{1}\sigma_{2}\pi_{t}+\sigma_{2}^{2} \pi_{t}^{2}} \,\mathrm{d}B_{t} - \int_{0}^{\infty} \ell(z,t)\,\tilde{N} (\mathrm{d}z,\mathrm{d}t),
\end{align}
in which $\{B_{t}\}_{t\geq0}$ is an $\boldsymbol{F}$-adapted standard Brownian motion, independent of $N$, defined by
\[
B_{t} = \frac{\sigma_{1} + \rho\sigma_{2}\pi_{t}}{\sqrt{\sigma_{1}^{2} + 2\rho \sigma_{1}\sigma_{2}\pi_{t} + \sigma_{2}^{2}\pi_{t}^{2}}}\,B_{t}^{(1)} + \frac{\sqrt{1-\rho^{2}}\,\sigma_{2}\pi_{t}}{\sqrt{\sigma_{1}^{2}+2\rho
\sigma_{1}\sigma_{2}\pi_{t}+\sigma_{2}^{2}\pi_{t}^{2}}}\,B_{t}^{(2)}.
\]

For ease of notation, let $\mathbb{E}_{x, t}\left[  \cdot \right] = \mathbb{E} \left[  \cdot \big| \, X_{t}^{u}=x \right]$ and $\mathrm{Var}_{x,t}\left[  \cdot \right] = \mathrm{Var} \left[  \cdot \big| \,X_{t}^{u}=x \right]$. 

\smallskip

\begin{definition}
\label{Def:1}
$($Admissible strategy$)$. A strategy $u = \left( \ell(Z_{t}, t), \pi_{t} \right)_{t \in[0,T]}$ is called \textrm{admissible} if it satisfies the following conditions:

\item {$($1$)$} $u$ is $\boldsymbol{F}$-progressively measurable;

\item {$($2$)$} For all $t \in[0, T]$ and $Z_{t} \ge0$, $0 \le \ell(Z_{t}, t) \le Z_{t}$;

\item {$($3$)$} For all $(x, t) \in\mathbb{R} \times[0, T]$, $\mathbb{E}_{x, t} \left[  \int_{t}^{T} (\ell^{2}(Z_{s}, s) +\pi_{s}^{2}) \, \mathrm{d} s \right]  < \infty$ with probability one;

\item {$($4$)$} For all $(x, t) \in\mathbb{R} \times[0, T]$, the stochastic
differential equation \eqref{SDE3} has a unique strong solution.


\end{definition}

The \textbf{main objective} of this paper is to study the reinsurance-investment problem for an insurer under a mean-variance criterion, that is, one who wishes to maximize $J^{u}(x,t)$, in which $J^{u}$ is given by
\begin{equation}
J^{u}(x,t) = \mathbb{E}_{x,t} \left[  X_{T}^{u}\right]  -\frac{\gamma}{2}\,\mathrm{Var}_{x,t}\left[  X_{T}^{u}\right]  ,\quad(x,t)\in\mathbb{R} \times [0, T],\label{Jdef}
\end{equation}
in which $\gamma > 0$ measures the insurer's degree of (absolute) risk aversion.

Maximizing $J^{u}(x, t)$ is a time-inconsistent problem in the sense that Bellman's optimality principle fails. We tackle the problem from a non-cooperative game point of view by defining an equilibrium strategy and its corresponding equilibrium value function; see, for example, Basak and Chabakauri \cite{BC10}, Bj\"ork and Murgoci \cite{BM10}, and Bj\"ork et al.\ \cite{BMZ14}.

\begin{definition}
\label{Def:2}
For an admissible strategy $u^{*} = \left(  \ell^{*}(Z_{t}, t), \pi^{*}_{t} \right)_{t\in[0, T]}$, for $\varepsilon> 0$, and for $t \in [0, T]$, define the strategy $u^{\varepsilon, t}$ by
\begin{equation}
\label{eq:uvareps}
u_{s}^{\varepsilon, t} = \left\{
\begin{array}
[c]{ll}
(\bar\ell(z, s), \bar{\pi}), & t \le s < t + \varepsilon, \vspace{0.1cm}\\
u^{*}_{s}, & 0 \le s < t \text{ or } t + \varepsilon\le s \le T,
\end{array}
\right.
\end{equation}
in which $\bar\ell(z, s)$ is an admissible retention strategy and $\bar{\pi}$ is a real constant. If, for all $(x, t) \in\mathbb{R} \times[0, T]$,
\[
\liminf_{\varepsilon\downarrow0} {\frac{J^{u^{*}}(x, t) - J^{u^{\varepsilon,
t}}(x, t)}{\varepsilon}} \ge0,
\]
then $u^{*}$ is an equilibrium strategy and $J^{u^{*}}(x, t)$ is the corresponding equilibrium value function.
\end{definition}

\section{Equilibrium reinsurance-investment strategy}

\subsection{Verification theorem}


We first provide a verification theorem whose proof we omit because it is similar to the proof of the verification theorem, Theorem 4.1, in Bj\"ork and Murgoci \cite{BM10}. Also, see the discussion in Bj\"ork et al.\ \cite{BMZ14} about applying the verification theorem under the mean-variance criterion.

For any admissible retention $\ell$ and for any constant $\pi\in\mathbb{R}$,
we define an integro-differential operator $\mathcal{A}^{\ell, \pi}$ as
follows:
\begin{align}
\label{IG}\mathcal{A}^{\ell, \pi} \, \phi(x, t)  &  := \lim_{\varepsilon
\downarrow0} \frac{\mathbb{E}_{x, t} \left[  \phi(X_{t+\varepsilon}^{u}, t +
\varepsilon) \right]  - \phi(x, t)}{\varepsilon}\nonumber\\
&  = \phi_{t}(x, t) + \left[  r x + (\mu- r)\pi+ \int_{0}^{\infty}\left(
(\theta- \eta)z + (1 + \eta) \ell(z, t) \right)  \nu(\mathrm{d} z) \right]
\phi_{x}(x, t)\nonumber\\
&  \quad+ \frac{1}{2} \left(  \sigma_{1}^{2} + 2 \rho\sigma_{1} \sigma_{2}
\pi+\sigma_{2}^{2} \pi^{2} \right)  \phi_{xx}(x, t) + \int_{0}^{\infty}(\phi(x
- \ell(z, t), t) - \phi(x, t)) \nu(\mathrm{d} z),
\end{align}
in which $\phi(x, t) \in C^{2, 1}(\mathbb{R} \times[0, T])$. In the first line
of \eqref{IG}, $u = (\ell, \pi)$, in which the constant $\pi$ represents the
strategy in which the insurer invests the constant $\pi$ in the risky asset.

\begin{theorem}
\label{thm:1} $($Verification theorem$)$. Suppose there exist $V(x, t)$ and
$g(x, t)\in C^{2, 1}(\mathbb{R} \times[0, T])$ satisfying the following conditions:

\item {$($1$)$} For all $(x, t) \in\mathbb{R} \times[0, T]$,
\begin{equation}
\label{EHJB1}\sup\limits_{\ell, \pi} \left\{  \mathcal{A}^{\ell, \pi} \, V(x, t) - \frac{\gamma}{2} \, \mathcal{A}^{\ell, \pi} \, g^{2}(x, t) + \gamma\,
g(x, t) \, \mathcal{A}^{\ell, \pi} \, g(x, t) \right\}  = 0.
\end{equation}
\hskip 18 pt Let $(\ell^{*}, \pi^{*})$ denote the pair that attains the supremum in \eqref{EHJB1}.

\item {$($2$)$} For all $(x, t) \in\mathbb{R} \times[0, T]$,
\begin{equation}
\label{EHJB2}\mathcal{A}^{\ell^{*}, \pi^{*}} g(x, t) = 0.
\end{equation}

\item {$($3$)$} For $x \in\mathbb{R}$,
\begin{equation}
\label{EHJB3}V(x, T) = x \text{ and } g(x, T) = x.
\end{equation}

Then, the equilibrium reinsurance-investment strategy $u^{*}$ is given by
\begin{equation}
\label{eq:ustar}u^{*}_{t} = \left(  \ell^{*}(Z_{t}, t), \pi^{*}(X_{t}, t)
\right)  .
\end{equation}
Note that $u^{*}$ is given in feedback form. $V(x, t) = J^{u^{*}}(x, t)$ is
the corresponding equilibrium value function, and $g(x, t) = \mathbb{E}_{x, t}
\left[  X_{T}^{u^{*}} \right]  $ is the expectation of terminal wealth.
\end{theorem}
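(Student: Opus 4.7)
The plan is to follow the Björk--Murgoci verification template, specialised to the mean-variance functional by writing the variance as $\mathbb{E}[(X_T^u)^2]-(\mathbb{E}[X_T^u])^2$ and systematically replacing the two conditional moments with solutions of linear Cauchy problems driven by the integro-differential operator $\mathcal{A}^{\ell,\pi}$ defined in \eqref{IG}. The three ingredients to assemble are: (a) the probabilistic interpretation of $g$; (b) the identification $V=J^{u^*}$; and (c) the small-perturbation inequality from Definition \ref{Def:2}.

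First, I would identify $g$. Condition \eqref{EHJB2} combined with the terminal condition in \eqref{EHJB3} says that $g$ solves a linear Cauchy problem driven by $\mathcal{A}^{\ell^*,\pi^*}$ with terminal data $x$. Applying It\^o's formula for the spectrally negative L\'evy-driven process $X^{u^*}$ to $s\mapsto g(X_s^{u^*},s)$ between $t$ and $T$, taking $\mathbb{E}_{x,t}$, and using admissibility of $u^*$ together with the integrability condition (3) of Definition \ref{Def:1} (and $\int_0^\infty z\,\nu(\mathrm{d}z)<\infty$) to kill the Brownian and compensated Poisson martingale terms, yields $g(x,t)=\mathbb{E}_{x,t}[X_T^{u^*}]$.

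Second, I would identify $V$ with $J^{u^*}$. Plugging the maximiser $(\ell^*,\pi^*)$ into \eqref{EHJB1} gives $\mathcal{A}^{\ell^*,\pi^*}V-\frac{\gamma}{2}\mathcal{A}^{\ell^*,\pi^*}g^2+\gamma g\,\mathcal{A}^{\ell^*,\pi^*}g=0$, and the last term vanishes by \eqref{EHJB2}, so $\mathcal{A}^{\ell^*,\pi^*}(V-\frac{\gamma}{2}g^2)=0$ with terminal value $x-\frac{\gamma}{2}x^2$. Running the same Dynkin argument as in the previous step gives $V(x,t)-\frac{\gamma}{2}g^2(x,t)=\mathbb{E}_{x,t}[X_T^{u^*}]-\frac{\gamma}{2}\mathbb{E}_{x,t}[(X_T^{u^*})^2]$. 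Using step one to rewrite $g^2(x,t)=(\mathbb{E}_{x,t}[X_T^{u^*}])^2$, the right-hand side collapses to $J^{u^*}(x,t)$, as required.

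Third and most delicate, I would verify the equilibrium inequality. For the perturbation $u^{\varepsilon,t}$ of \eqref{eq:uvareps} the tower property gives $\mathbb{E}_{x,t}[X_T^{u^{\varepsilon,t}}]=\mathbb{E}_{x,t}[g(X_{t+\varepsilon}^{u^{\varepsilon,t}},t+\varepsilon)]$, and the conditional variance decomposes into the expected conditional variance plus the variance of the conditional mean; using the first two steps to substitute $\mathrm{Var}_{y,s}[X_T^{u^*}]=\frac{2}{\gamma}(g(y,s)-V(y,s))$, one obtains after adding and subtracting $\frac{\gamma}{2}g^2(x,t)$
\begin{align*}
J^{u^*}(x,t)-J^{u^{\varepsilon,t}}(x,t)&=-\bigl\{\mathbb{E}_{x,t}[V(X_{t+\varepsilon}^{u^{\varepsilon,t}},t+\varepsilon)]-V(x,t)\bigr\}\\
&\quad+\frac{\gamma}{2}\bigl\{\mathbb{E}_{x,t}[g^2(X_{t+\varepsilon}^{u^{\varepsilon,t}},t+\varepsilon)]-g^2(x,t)\bigr\}\\
&\quad-\frac{\gamma}{2}\bigl\{\bigl(\mathbb{E}_{x,t}[g(X_{t+\varepsilon}^{u^{\varepsilon,t}},t+\varepsilon)]\bigr)^2-g^2(x,t)\bigr\}.
\end{align*}
Dividing by $\varepsilon$ and letting $\varepsilon\downarrow 0$, the first two brackets produce $\mathcal{A}^{\bar\ell,\bar\pi}V(x,t)$ and $\mathcal{A}^{\bar\ell,\bar\pi}g^2(x,t)$, while the chain rule on the squared conditional mean gives $2g(x,t)\mathcal{A}^{\bar\ell,\bar\pi}g(x,t)$, so the limit equals
\[
-\Bigl\{\mathcal{A}^{\bar\ell,\bar\pi}V(x,t)-\frac{\gamma}{2}\mathcal{A}^{\bar\ell,\bar\pi}g^2(x,t)+\gamma\,g(x,t)\mathcal{A}^{\bar\ell,\bar\pi}g(x,t)\Bigr\}\ge 0,
\]
where nonnegativity is exactly the supremum condition in \eqref{EHJB1}. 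The main obstacle is this last step: one must justify the $O(\varepsilon)$ expansion of the three conditional moments uniformly in the constant perturbation $(\bar\ell,\bar\pi)$, which requires the integrability condition (3) of Definition \ref{Def:1} together with standard moment estimates for the compensated Poisson integral against the L\'evy measure $\nu$, and it is exactly for this technical reason that the authors invoke Björk and Murgoci's verification theorem rather than reprove it.
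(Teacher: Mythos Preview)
Your proposal is correct and follows exactly the Bj\"ork--Murgoci verification template that the paper invokes; in fact, the paper omits the proof entirely, stating only that it ``is similar to the proof of the verification theorem, Theorem 4.1, in Bj\"ork and Murgoci \cite{BM10}.'' Your three-step outline (Dynkin identification of $g$, identification of $V-\tfrac{\gamma}{2}g^2$ with the second-moment functional, and the $\varepsilon$-expansion of $J^{u^*}-J^{u^{\varepsilon,t}}$ via the law of total variance) is precisely that argument specialised to the present L\'evy setting, and your closing remark about the uniform $O(\varepsilon)$ expansion being the technical reason for deferring to \cite{BM10} is apt.
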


\subsection{Equilibrium strategy}

One can use Theorem \ref{thm:1} directly to obtain \textit{an} equilibrium
strategy. However, we wish to show that there is only one such equilibrium
strategy; to that end, we have the following lemma, which is similar to Lemma
1 in Basak and Chabakauri \cite{BC10}.

\begin{lemma}
\label{lem:1} The value function $V$ and expectation of terminal wealth $g$
under the mean-variance criterion are separable in the surplus $x$ and admit
the following representation:
\begin{equation}
\label{conj}\left\{
\begin{array}
[c]{l}
V(x, t) = \mathrm{e}^{r(T - t)} \, x + B(t), \quad B(T) = 0,\\
g(x, t) = \mathrm{e}^{r(T - t)} \, x + b(t), \quad b(T) = 0.
\end{array}
\right.
\end{equation}

\end{lemma}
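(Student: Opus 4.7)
The plan is to prove the representation by an ansatz-and-verify approach based on Theorem \ref{thm:1}. I postulate that $V$ and $g$ are affine in the surplus, $V(x,t)=A(t)\,x+B(t)$ and $g(x,t)=a(t)\,x+b(t)$, with terminal conditions $A(T)=a(T)=1$ and $B(T)=b(T)=0$ imposed by \eqref{EHJB3}. The goal is to show that the extended HJB system forces $A(t)=a(t)=\mathrm{e}^{r(T-t)}$ and, in the process, that the maximizers $(\ell^{*},\pi^{*})$ are independent of $x$.

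Substituting the ansatz into \eqref{IG} and using $g_{x}=a(t)$, $g_{xx}=0$, and $g(x-\ell,t)-g(x,t)=-a(t)\ell$, a direct computation gives
\[
\mathcal{A}^{\ell,\pi}g(x,t) = [a'(t)+r\,a(t)]\,x + b'(t) + a(t)(\mu-r)\pi + a(t)\int_{0}^{\infty}\bigl[(\theta-\eta)z+\eta\,\ell(z,t)\bigr]\,\nu(\mathrm{d}z),
\]
so \eqref{EHJB2} forces $a'(t)+r\,a(t)=0$, and together with $a(T)=1$ this yields $a(t)=\mathrm{e}^{r(T-t)}$ together with an ODE determining $b(t)$. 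For \eqref{EHJB1}, I will invoke the ``carr\'e-du-champ'' identity
\[
\mathcal{A}^{\ell,\pi}g^{2} - 2g\,\mathcal{A}^{\ell,\pi}g = \bigl(\sigma_{1}^{2}+2\rho\sigma_{1}\sigma_{2}\pi+\sigma_{2}^{2}\pi^{2}\bigr)g_{x}^{2} + \int_{0}^{\infty}[g(x-\ell,t)-g(x,t)]^{2}\,\nu(\mathrm{d}z),
\]
which under the affine ansatz equals $a(t)^{2}\bigl[\sigma_{1}^{2}+2\rho\sigma_{1}\sigma_{2}\pi+\sigma_{2}^{2}\pi^{2}+\int_{0}^{\infty}\ell^{2}\,\nu(\mathrm{d}z)\bigr]$ and is therefore free of $x$. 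Consequently, the only $x$-dependent piece of $\mathcal{A}^{\ell,\pi}V-\tfrac{\gamma}{2}\mathcal{A}^{\ell,\pi}g^{2}+\gamma g\,\mathcal{A}^{\ell,\pi}g$ is the term $[A'(t)+r\,A(t)]x$ inherited from $\mathcal{A}^{\ell,\pi}V$; the supremum condition in \eqref{EHJB1} forces this to vanish, producing $A(t)=\mathrm{e}^{r(T-t)}$ together with a residual ODE for $B(t)$.

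The step that needs real care is the \emph{self-consistency} of the ansatz. Once $A(t)=a(t)=\mathrm{e}^{r(T-t)}$, the expression being maximized in \eqref{EHJB1} depends only on $(\ell(z,t),\pi,t)$, with no surviving dependence on $x$, so the maximizers $\ell^{*}=\ell^{*}(z,t)$ and $\pi^{*}=\pi^{*}(t)$ are automatically independent of the surplus. This retroactively justifies both the affine ansatz and the a priori assumption that the retention strategy is a function of $(z,t)$ alone; invoking Theorem \ref{thm:1}, the constructed $V$ and $g$ are then the equilibrium value function and the expectation of terminal wealth, yielding the representation claimed in the lemma.
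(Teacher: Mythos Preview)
Your argument is correct, but it takes a route quite different from the paper's and somewhat undercuts the lemma's purpose in the paper's logical flow. The paper proves separability \emph{directly from the dynamics}: applying It\^o's formula to the discounted wealth $\mathrm{e}^{r(T-t)}X_t^u$ yields an increment process $G^u(t)$ that does not involve $X_t^u$, so $\mathbb{E}_{x,t}[X_T^u]=\mathrm{e}^{r(T-t)}x+(\text{term free of }x)$ and $\mathrm{Var}_{x,t}[X_T^u]$ is free of $x$ altogether. This immediately gives the affine form of $J^u$ and $g$ for \emph{any} admissible (state-independent) strategy, without touching the extended HJB system. The paper explicitly frames the lemma this way because it wants separability established \emph{prior to} and independently of the verification argument, as a step toward uniqueness of the equilibrium---indeed, the sentence introducing the lemma says that Theorem~\ref{thm:1} alone would only give ``an'' equilibrium.

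Your ansatz-and-verify approach instead postulates the affine form, feeds it into \eqref{EHJB1}--\eqref{EHJB3}, and checks consistency; in doing so you essentially reproduce the proof of Theorem~\ref{thm:2} inside the lemma. That is logically valid (the verification theorem then pins down $V$ and $g$), and your carr\'e-du-champ computation is clean. But it is more roundabout, relies on the HJB machinery rather than the structural linearity of the wealth SDE, and blurs the distinction the paper draws between the lemma (separability is forced by the dynamics) and the theorem (explicit solution of the reduced system). The paper's approach buys a conceptually cleaner separation of concerns and a stronger statement: separability holds for the whole class of strategies, not just for a candidate solution of the HJB.
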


\begin{proof}
From \eqref{SDE3}, we have
\begin{align}
\label{SDE3_1}\mathrm{d} \left(  \mathrm{e}^{r(T - t)} \, X_{t}^{u} \right)
&  = \left[  (\mu- r) \pi_{t} + \int_{0}^{\infty}\left(  (\theta- \eta) z +
\eta\ell(z, t) \right)  \nu(\mathrm{d} z) \right]  \mathrm{e}^{r(T - t)} \,
\mathrm{d} t\nonumber\\
&  \quad+ \sqrt{\sigma_{1}^{2} + 2 \rho\sigma_{1} \sigma_{2} \pi_{t} +
\sigma_{2}^{2} \pi_{t}^{2}} \, \mathrm{e}^{r(T - t)} \, \mathrm{d} B_{t} -
\int_{0}^{\infty}\ell(z, t) \, \mathrm{e}^{r(T - t)} \, \tilde{N}(\mathrm{d}
z, \mathrm{d} t)\nonumber\\
&  =: G^{u}(t).
\end{align}
From \eqref{SDE3_1} it follows that
\begin{equation}
\label{eq:sep1}\mathbb{E}_{x, t} \left[  X^{u}_{T} \right]  = \mathrm{e}^{r(T
- t)} \, x + \mathbb{E}_{x, t} \left[  \int_{t}^{T} \left(  (\mu- r) \pi_{s} +
\int_{0}^{\infty}\left(  (\theta- \eta) z + \eta\ell(z, s) \right)
\nu(\mathrm{d} z) \right)  \mathrm{e}^{r(T - s)} \, \mathrm{d} s \right]  ,
\end{equation}
and
\begin{equation}
\label{eq:sep2}\mathrm{Var}_{x, t} \left[  X^{u}_{T} \right]  = \mathrm{Var}_{x, t} \left[  \int_{t}^{T} G^{u}(s) \, \mathrm{d} s \right]  .
\end{equation}
The expressions in \eqref{eq:sep1} and \eqref{eq:sep2} for the expectation and
variance of $X^{u}_{T}$, respectively, imply that the objective and the
expectation functions for the mean-variance criterion are separable in the
surplus $x$, as given in \eqref{conj}.
\end{proof}

In the next theorem, we present the equilibrium strategy and the corresponding
equilibrium value function.

\begin{theorem}
\label{thm:2}
The unique equilibrium reinsurance-investment strategy $u^{*} = (\ell^{*}(z, t), \pi^{*}(t))$ for the mean-variance criterion is given by
\begin{equation}
\label{equ:astar1}
\left\{
\begin{array}
[c]{l}
\ell^{*}(z, t) = \dfrac{\eta}{\gamma} \, \mathrm{e}^{-r(T-t)} \wedge z, \vspace{0.2cm}\\
\pi^{*}(t) = {\dfrac{\mu- r}{\gamma\, \sigma_{2}^{2}}} \, \mathrm{e}^{-r(T - t)} - \rho\, \dfrac{\sigma_{1}}{\sigma_{2}},
\end{array}
\right.
\end{equation}
and the corresponding value function is
\begin{equation}
\label{ovf}V(x, t) = \mathrm{e}^{r(T-t)} \, x + B(t),
\end{equation}
in which
\begin{align}
\label{eq:B}
B(t)  &= \int_{t}^{T} \left\{  \frac{1}{2 \gamma} \left( \frac{\mu- r}{\sigma_{2}} \right)  ^{2} + \mathrm{e}^{r(T - s)} \left[ - (\mu- r) \rho\, \frac{\sigma_{1}}{\sigma_{2}} + \int_{0}^{\infty}\left( (\theta- \eta) z + \eta\ell^{*}(z, s) \right)  \nu(\mathrm{d} z) \right]
\right. \nonumber\\
&  \qquad \qquad \left.  - \frac{\gamma}{2} \, \mathrm{e}^{2r(T - s)} \left[ \left(  1 - \rho^{2} \right)  \sigma_{1}^{2} + \int_{0}^{{\infty}} \left( \ell^{*}(z, s) \right)  ^{2} \nu(\mathrm{d} z) \right]  \right\}  \mathrm{d} s.
\end{align}
Furthermore,
\begin{equation}
\label{E1}
\mathbb{E}_{x, t}\left[  X_{T}^{u^{*}}\right]  = g(x, t) = \mathrm{e}^{r(T-t)} \, x + b(t),
\end{equation}
in which
\begin{equation}
\label{eq:b}
b(t) = \int_{t}^{T} \left\{  \frac{1}{\gamma} \left(  \frac{\mu- r}{\sigma_{2}} \right)  ^{2} + \mathrm{e}^{r(T - s)} \left[  - (\mu- r) \rho\, \frac{\sigma_{1}}{\sigma_{2}} + \int_{0} ^{\infty}\left(  (\theta- \eta) z + \eta\ell^{*}(z, s) \right)  \nu(\mathrm{d} z) \right]  \right\}  \mathrm{d} s.
\end{equation}

\end{theorem}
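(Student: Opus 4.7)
The plan is to combine the separable ansatz from Lemma~\ref{lem:1} with the extended HJB system of Theorem~\ref{thm:1}. After plugging $V(x,t) = \mathrm{e}^{r(T-t)}x + B(t)$ and $g(x,t) = \mathrm{e}^{r(T-t)}x + b(t)$ into the operator \eqref{IG}, the $rx$ drift term cancels the $-r\mathrm{e}^{r(T-t)}x$ coming from the time derivative, so that both $\mathcal{A}^{\ell,\pi}V$ and $\mathcal{A}^{\ell,\pi}g$ simplify to expressions that depend only on $t$, $\pi$, and $\ell(\cdot,t)$. The main structural observation, and what I expect to be the most delicate step, is the identity
\begin{equation*}
\mathcal{A}^{\ell,\pi}g^{2}(x,t) = 2g(x,t)\,\mathcal{A}^{\ell,\pi}g(x,t) + \mathrm{e}^{2r(T-t)}\Bigl[\sigma_{1}^{2} + 2\rho\sigma_{1}\sigma_{2}\pi + \sigma_{2}^{2}\pi^{2} + \int_{0}^{\infty}\ell^{2}(z,t)\,\nu(\mathrm{d}z)\Bigr],
\end{equation*}
obtained by writing $g^{2}(x-\ell,t) - g^{2}(x,t)$ as a difference of squares and using $g(x-\ell,t) - g(x,t) = -\mathrm{e}^{r(T-t)}\ell$. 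Once this is in hand, the cross terms in $-\frac{\gamma}{2}\mathcal{A}^{\ell,\pi}g^{2} + \gamma g\,\mathcal{A}^{\ell,\pi}g$ cancel, and \eqref{EHJB1} reduces to a static, strictly concave optimization problem in $(\pi,\ell(\cdot,t))$ that is, crucially, separable over the pointwise values of $\ell$.

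I would then solve this static problem. The first-order condition in $\pi$ is linear and yields the unique maximizer $\pi^{*}(t)$ in \eqref{equ:astar1}. Pointwise in $z$, the objective is a strictly concave quadratic in $\ell(z,t)$ with unconstrained interior maximizer $(\eta/\gamma)\mathrm{e}^{-r(T-t)}$; projecting onto the admissibility interval $[0,z]$ gives $\ell^{*}(z,t) = (\eta/\gamma)\mathrm{e}^{-r(T-t)} \wedge z$. Strict concavity guarantees that both maximizers are unique, and combined with Lemma~\ref{lem:1}, which forces the separable form of the equilibrium value function, this delivers the uniqueness claimed in the theorem.

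Finally, substituting $(\pi^{*},\ell^{*})$ back into \eqref{EHJB1} produces an ODE of the form $B'(t) + \Phi^{*}(t) = 0$ with terminal condition $B(T) = 0$. Using the two auxiliary identities $(\mu-r)\pi^{*}\mathrm{e}^{r(T-t)} = (\mu-r)^{2}/(\gamma\sigma_{2}^{2}) - (\mu-r)\rho\sigma_{1}\mathrm{e}^{r(T-t)}/\sigma_{2}$ and $\sigma_{1}^{2} + 2\rho\sigma_{1}\sigma_{2}\pi^{*} + \sigma_{2}^{2}(\pi^{*})^{2} = (1-\rho^{2})\sigma_{1}^{2} + (\mu-r)^{2}\mathrm{e}^{-2r(T-t)}/(\gamma^{2}\sigma_{2}^{2})$, straightforward integration recovers \eqref{eq:B}. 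Condition \eqref{EHJB2} then collapses to the simpler linear ODE $b'(t) + \bigl[(\mu-r)\pi^{*} + \int_{0}^{\infty}((\theta-\eta)z + \eta\ell^{*}(z,t))\,\nu(\mathrm{d}z)\bigr]\mathrm{e}^{r(T-t)} = 0$ with $b(T)=0$, which integrates to \eqref{eq:b}. The resulting $V$ and $g$ are manifestly in $C^{2,1}(\mathbb{R}\times[0,T])$, so Theorem~\ref{thm:1} applies and completes the verification. The only real obstacle is the cancellation identity of the first paragraph; after that, the problem reduces to strictly concave quadratic optimization plus routine ODE integration.
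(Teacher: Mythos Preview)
Your proposal is correct and follows essentially the same route as the paper: both insert the separable ansatz of Lemma~\ref{lem:1} into the extended HJB system, derive the carr\'e-du-champ identity (your display is exactly a rearrangement of the paper's equation \eqref{eq:A2}), reduce \eqref{EHJB1} to the strictly concave quadratic \eqref{eq:EHJB1_1}, and then optimize pointwise in $z$ and in $\pi$ before integrating the resulting ODEs for $B$ and $b$. The only cosmetic differences are that you make the uniqueness argument via strict concavity plus Lemma~\ref{lem:1} explicit inside the proof (the paper defers this to Remark~\ref{rem:1}), and you should add the one-line check that $u^{*}$ is admissible in the sense of Definition~\ref{Def:1}, as the paper does at the very end.
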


\begin{proof}
We verify that $u^*$, $V$, and $g$ defined, respectively, in \eqref{equ:astar1}, \eqref{ovf}, and \eqref{E1}, satisfy conditions (1)--(3) in Theorem \ref{thm:1}. To that end, from \eqref{IG}, we compute
\begin{equation}
\label{eq:A1}
\mathcal{A}^{\ell,\pi}\left(  \mathrm{e}^{r(T-t)}\,x + \beta(t) \right) = \beta_{t} + C^{\ell,\pi}(t) \,\mathrm{e}^{r(T-t)}, 
\end{equation}
in which $C$ is given by
\[
C^{\ell,\pi}(t)=(\mu-r)\pi+\int_{0}^{\infty}\left(  (\theta-\eta)z+\eta \,\ell(z,t)\right)  \nu(\mathrm{d}z).
\]
Also, from \eqref{IG}, we obtain, after simplifying,
\begin{align}\label{eq:A2}
&  \left(  \mathrm{e}^{r(T-t)}\,x+b(t)\right)  \mathcal{A}^{\ell,\pi}\left( \mathrm{e}^{r(T-t)}\,x+b(t)\right)  -\frac{1}{2}\,\mathcal{A}^{\ell,\pi
}\left(  \left(  \mathrm{e}^{r(T-t)}\,x+b(t)\right)  ^{2}\right) \nonumber \\
&  \quad=-\frac{1}{2}\,\mathrm{e}^{2r(T-t)}\left[  \left(  \sigma_{1}^{2}+2\rho\sigma_{1}\sigma_{2}\pi+\sigma_{2}^{2}\pi^{2}\right)  +\int_{0}^{\infty}\ell^{2}(z,t)\,\nu(\mathrm{d}z)\right]  .
\end{align}

By substituting the expressions for $V$ and $g$ from \eqref{conj} into \eqref{EHJB1} and by using the results of \eqref{eq:A1} and \eqref{eq:A2}, we get
\begin{equation}
\label{eq:EHJB1_1}
\sup\limits_{\ell, \pi} \left\{  B_{t} + C^{\ell, \pi}(t) \, \mathrm{e}^{r(T-t)} - \frac{\gamma}{2} \, \mathrm{e}^{2r(T - t)} \left[ \left(  \sigma_{1}^{2} + 2 \rho\sigma_{1} \sigma_{2} \pi+ \sigma_{2}^{2} \pi^{2} \right)  + \int_{0}^{\infty}\ell^{2}(z, t) \, \nu(\mathrm{d} z) \right]  \right\}  = 0.
\end{equation}
The expression in \eqref{eq:EHJB1_1} is concave with respect to $\pi$; thus, we obtain the optimal value of $\pi$ from the first-order condition. Specifically,
\begin{equation}
\label{eq:pistar}
\pi^{*}(t) = {\dfrac{\mu- r}{\gamma\, \sigma_{2}^{2}}} \, \mathrm{e}^{-r(T - t)} - \rho\, \dfrac{\sigma_{1}}{\sigma_{2}}.
\end{equation}
Next, consider the terms in $\ell$ in \eqref{eq:EHJB1_1}, that is,
\begin{equation}
\label{eq:int}
\int_{0}^{\infty}\left(  \eta\ell(z, t) - \frac{\gamma}{2} \, e^{r(T - t)} \, \ell^{2}(z, t) \right)  \nu(\mathrm{d} z).
\end{equation}
If we maximize the integrand in the integral in \eqref{eq:int} $z$-by-$z$ for a given $t \in[0, T]$, then we will maximize the integral itself. With respect to $\ell$, the graph of $f(\ell) := \eta\ell- \frac{\gamma}{2} \, e^{r(T - t)} \, \ell^{2}$ is a concave parabola that increases through the origin $(0, f(0)) = (0, 0)$; thus, $f$'s maximizer $\ell^{*} \in[0, z]$ is given by
\begin{equation}
\label{eq:ellstar}
\ell^{*}(z, t) = \dfrac{\eta}{\gamma} \, \mathrm{e}^{-r(T-t)} \wedge z.
\end{equation}

If we substitute $u^{*} = (\ell^{*}, \pi^{*})$ into \eqref{eq:EHJB1_1} and solve for $B(t)$ (by using the terminal condition $B(T) = 0$), then we obtain the expression in \eqref{eq:B}. Also, if we solve for $b(t)$ (with the same terminal condition $b(T) = 0$) in the equation $\mathcal{A}^{\ell^{*}, \pi^{*}} \left(  \mathrm{e}^{r(T-t)} \, x + b(t) \right)  = 0$, then we obtain the expression in \eqref{eq:b}.

Thus, $u^{*}$, $V$, and $g$ defined, respectively, in \eqref{equ:astar1}, \eqref{ovf}, and \eqref{E1}, satisfy conditions (1)--(3) in Theorem
\ref{thm:1}. To complete this proof, note that $u^{*}$ is an admissible strategy, as defined in Definition \ref{Def:2}.
\end{proof}

\begin{remark}
\label{rem:1} Lemma $\ref{lem:1}$ and Theorem $\ref{thm:2}$ prove that excess-loss reinsurance is the unique equilibrium strategy for a
time-consistent insurer under the mean-variance criterion; in that sense, we consider it optimal. Note that the equilibrium strategy is independent of the state variable $x$. This independence results from the fact that the risk aversion $\gamma$ is a constant. See \eqref{eq:mv} in which $\gamma$ in the mean-variance approximation of the utility's certainty equivalence represents the utility's absolute risk aversion. Recall that if utility exhibits constant absolute risk aversion, then the form of the utility function is exponential,
and decision making under exponential utility invariably results in strategies that are independent of the state variable. See Basak and Chabakauri \cite{BC10} and Bj\"{o}rk et al.\ \cite{BMZ14} for further discussion.

Moreover, the equilibrium excess-loss strategy is independent of the parameters of the risky asset and the safety loading of the insurer, while the equilibrium investment strategy is independent of the safety loadings of both the insurer and the reinsurer. In other words, the equilibrium reinsurance strategy is unaffected by the financial market, while the equilibrium investment strategy is unaffected by the price of reinsurance, and both strategies are unaffected by the price of the primary insurance.
\end{remark}

The behavior of the equilibrium strategy is given in the following corollary. The proof is straightforward and hence omitted. We discuss the intuition behind the behavior of the equilibrium strategy in the numerical analysis in the next section.

\begin{corollary}
The equilibrium retained claim $\ell^{*}(z, t)$ increases in $\eta$ and $t$, decreases in $r,$ $\gamma,$ and $T,$ and is independent of $x,$ $\theta,$ $\rho,$ $\sigma_{1},$ and $\sigma_{2};$ the equilibrium amounted invested in the risky asset $\pi^{*}(t)$ increases in $\mu$ and $t,$ decreases in $r,$ $\gamma,$ $\rho,$ and $T,$ and is independent of $x$ and $z$.
\end{corollary}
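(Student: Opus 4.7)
My plan is to differentiate the closed-form expressions in \eqref{equ:astar1} directly and read off the signs; the statement is a purely computational consequence of Theorem \ref{thm:2}, so no new probabilistic machinery is needed, and this is consistent with the authors' assessment that the proof is straightforward.

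For the reinsurance retention $\ell^{*}(z,t)=(\eta/\gamma)\,\mathrm{e}^{-r(T-t)}\wedge z$, I would first observe that the parameters $x,\theta,\rho,\sigma_{1},\sigma_{2}$ do not appear in the expression, which gives independence for free. For the remaining parameters, the second argument $z$ of the wedge is constant in $\eta,\gamma,r,t,T$, so the wedge inherits the monotonicity of the first argument $h(\eta,\gamma,r,t,T):=(\eta/\gamma)\,\mathrm{e}^{-r(T-t)}$. A one-line partial-derivative check shows that $h$ is increasing in $\eta$ and $t$ (using $r>0$) and decreasing in $\gamma$, $r$, and $T$ (using $r>0$ and $T-t\ge 0$). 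Transferring these through the wedge gives the claim for $\ell^{*}$ (with weak monotonicity on the saturated region $\{z\le(\eta/\gamma)\,\mathrm{e}^{-r(T-t)}\}$).

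For the investment amount $\pi^{*}(t)=(\mu-r)/(\gamma\sigma_{2}^{2})\,\mathrm{e}^{-r(T-t)}-\rho\sigma_{1}/\sigma_{2}$, the absence of $x$ and $z$ again gives the stated independence immediately. The partials in $\mu$, $\gamma$, $\rho$, $t$, and $T$ are sign-obvious once I note that $\mu>r$ and $\sigma_{1},\sigma_{2}>0$, so the coefficient $(\mu-r)/(\gamma\sigma_{2}^{2})$ and the exponential factor are strictly positive. The one computation that is not instant is
\begin{equation*}
\partial_{r}\pi^{*}(t)=-\frac{\mathrm{e}^{-r(T-t)}}{\gamma\sigma_{2}^{2}}\bigl[1+(\mu-r)(T-t)\bigr],
\end{equation*}
which is strictly negative under $\mu>r$ and $T-t\ge 0$.

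The only mild obstacle is precisely this $r$-derivative of $\pi^{*}$, where two separate channels through which $r$ affects $\pi^{*}$ (the risk premium $\mu-r$ and the discount factor $\mathrm{e}^{-r(T-t)}$) must be combined; the bracketed expression above packages them and makes the sign transparent. With that one line handled, every remaining assertion of the corollary reduces to inspection of a single monotone factor in \eqref{equ:astar1}.
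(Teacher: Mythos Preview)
Your proposal is correct and is exactly the straightforward differentiation of the closed forms in \eqref{equ:astar1} that the paper alludes to when it omits the proof; there is nothing to add or repair. Your handling of the $r$-derivative of $\pi^{*}$, where the two channels combine into the bracket $1+(\mu-r)(T-t)$, is the only step requiring a moment's thought, and you carried it out correctly.
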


\subsection{Related problems}

In this section, we compare the equilibrium strategy in Theorem \ref{thm:2} with the optimal strategies for two related problems.

\subsubsection{Exponential utility}

First, as we observed in the Introduction and in Remark \ref{rem:1}, the mean-variance criterion is related to maximizing expected utility of terminal wealth under \textit{constant} absolute risk aversion $\gamma$.  For the latter problem, a standard verification theorem states that if we find a classical solution $U$ to $\sup_{\ell, \pi} \mathcal{A}^{\ell, \pi} U(x, t) = 0$, with terminal condition $U(x, T) = - e^{-\gamma x}$, then $U$ equals $\sup_{\ell, \pi} \mathbb{E}_{x, t} \left[ u \left(  X^{u}_{T} \right)  \right]  $, in which $u(x) = - e^{-\gamma x}$. Furthermore, the optimal strategy is given in feedback form by the maximizer of $\mathcal{A}^{\ell, \pi} U(x, t)$.

For the model in this paper, it is straightforward to show that the optimal strategy is $(\ell^{u}, \pi^{u})$, in which $\ell^{u}$ and $\pi^{u}$ are given by
\begin{equation}
\label{equ:ustar1}
\left\{
\begin{array}
[c]{l}
\ell^{u}(z, t) = \dfrac{\ln(1 + \eta)}{\gamma} \, \mathrm{e}^{-r(T-t)} \wedge z, \vspace{0.2cm}\\
\pi^{u}(t) = {\dfrac{\mu- r}{\gamma\, \sigma_{2}^{2}}} \, \mathrm{e}^{-r(T - t)} - \rho\, \dfrac{\sigma_{1}}{\sigma_{2}}.
\end{array}
\right.
\end{equation}
Note that, for small values of $\eta$, $\ell^{u}$ is approximately equal to $\ell^{*}$, but $\pi^{u}$ is identically equal to $\pi^{*}$. This result further confirms the close relationship between finding the equilibrium strategy for the mean-variance criterion with constant risk aversion parameter $\gamma$ and maximizing expected utility of terminal wealth with constant absolute risk aversion $\gamma$.

\subsubsection{Pre-commitment}

Second, if the insurer pre-commits to its strategy at time 0 for the entire period $[0, T]$ to maximize the time-0 mean-variance objective function in \eqref{Jdef}, then the optimal {\it investment} strategy differs from $\pi^*$, as shown in Basak and Chabakauri \cite{BC10}.  Furthermore, the optimal {\it reinsurance} strategy also differs from $\ell^*$.  We demonstrate the latter statement in this section.

The pre-commitment problem is given by
\begin{equation}
\label{eq:prec}
\sup_{\ell, \pi} \, \mathbb{E}_{x, 0} \left[ X_{T} \right] - \frac{\gamma}{2} \, \mathrm{Var}_{x, 0}  \left[ X_{T} \right].
\end{equation}
By following the work in Zhou and Li \cite{ZL00}, we first solve the following auxiliary problem
\begin{equation}
\label{eq:aux}
\mathcal{U}(x, t) = \sup_{\ell, \pi} \mathbb{E}_{x, t} \left[ \alpha X_T - \frac{\gamma}{2} \, X_T^2 \right],
\end{equation}
with the optimal strategy given in feedback form by $\left( \hat \pi(\alpha, X_t, t), \hat \ell(\alpha, z, X_t, t) \right)$.  Then, by setting $\alpha$ equal to the solution $\alpha^*$ of the following equation
\[
\alpha = 1 + \gamma \mathbb{E}_{x_0, 0} \left( X^{\hat \pi(\alpha, X_t, t), \hat \ell(\alpha, z, X_t, t)}_T \right),
\]
$(\hat \pi, \hat \ell)$ with $\alpha = \alpha^*$ equals the optimal strategy for the pre-commitment problem in \eqref{eq:prec}.  We anticipate the control $\ell$ will depend on the state variable $x$ and write $\ell = \ell(z, x, t)$ in feedback form.  Furthermore, $(\hat \pi, \hat \ell)$ clearly depends on $x_0$ through $\alpha^*$.  Note that $\mathcal{U}$ in \eqref{eq:aux} is concave with respect to $x$ because $\alpha x - \frac{\gamma}{2} \, x^2$ is concave and the surplus is linear with respect to the controls.

If we find a classical solution $\sup_{\ell, \pi} \mathcal{A}^{\ell, \pi} \mathcal{V}(x, t) = 0$, with terminal condition $\mathcal{V}(x, T) = \alpha x - \frac{\gamma}{2} \, x^2$, then $\mathcal{V} = \mathcal{U}$, the value function of the auxiliary problem in \eqref{eq:aux}.    Suppose we have such a classical solution of this boundary-value problem; without ambiguity, write it as $\mathcal U$.  Then, the terms in the Hamilton-Jacobi-Bellman equation involving $\ell$ are
\[
\max_{\ell} \, \int_0^\infty \left( (1 + \eta) \ell \, \mathcal{U}_x(x, t) + \mathcal{U}(x - \ell, t) \right) \nu(\drm z).
\]
As in the proof of Theorem \ref{thm:2}, we maximize the integral by maximizing the integrand $z$-by-$z$ for a fixed value of $(x, t)$ over $\ell$ such that $0 \le \ell(z, x, t) \le z$.  Because $\mathcal{U}$ is concave with respect to $x$, it is straightforward to show that the optimal reinsurance is of the form
\begin{equation}
\hat \ell(z, x, t) = d(x, t) \wedge z,
\end{equation}
in which $d = d(x, t)$ is given by
\begin{equation}
\label{equ:ustar2}
d(x, t) = 
\left\{
\begin{array}
[c]{ll}
\ell_c, &\mathrm{if} \; \exists \ell_c \in (0, z) \; \mathrm{s.t.} \; (1 + \eta)\mathcal{U}_x(x, t) = \mathcal{U}_x(x - \ell_c, t), \vspace{0.2cm}\\
\infty, &\mathrm{if} \; (1 + \eta)\mathcal{U}_x(x, t) - \mathcal{U}_x(x - \ell_c, t) > 0, \; \forall \ell > 0 \vspace{0.2cm}\\
0, &\mathrm{if} \;\, \mathcal{U}_x(x, t) \le 0.
\end{array}
\right.
\end{equation}

At time $T$, $\mathcal{U}(x, T) = \alpha^* x - \frac{\gamma}{2} \, x^2$, and $\alpha^*$ depends on $x_0$; thus,
\[
\hat \ell(z, x, T) = \eta \left( \frac{\alpha^*}{\gamma} - x \right)_+ \wedge z \ne \frac{\eta}{\gamma} \wedge z = \ell^*(z, T).
\]
Thus, the optimal pre-commitment reinsurance strategy differs from the equilibrium reinsurance strategy for the time-consistent problem.\footnote{As an aside, note that if we begin the pre-commitment problem at time $T - \epsilon$ for $\epsilon > 0$ small, then $\lim_{\epsilon \to 0} \hat \ell(z, x, T) = \frac{\eta}{\gamma} \wedge z = \ell^*(z, T)$ because $\lim_{\epsilon \to 0} x_{T-\epsilon} = x$, in which $X_T = x$.  This equality makes sense because pre-committing over a vanishingly small interval is equivalent to being time-consistent over that interval.}

\section{Numerical examples}

\begin{example}
[Equilibrium strategies] In this example, we examine the sensitivity of the equilibrium reinsurance-investment strategies given in \eqref{equ:astar1} to different parameters. Unless otherwise stated, the parameter values are given by $r = 0.05,$ $\mu = 0.10,$ $\sig_{1} = 0.20,$ $\sig_{2} = 0.30,$ $\eta = 0.60,$ $\rho = 0.50,$ $\gamma = 1,$ and $T = 9$. Denote the corresponding equilibrium strategy by $(m^*, \pi^*),$ in which
\[
m^*(t)=\dfrac{\eta}{\gamma}\,\mathrm{e}^{-r(T-t)}\,.
\]
\[
{\parbox[b]{3.5284in}{\begin{center}
\includegraphics[
height=2.6498in,
width=3.5284in
]
{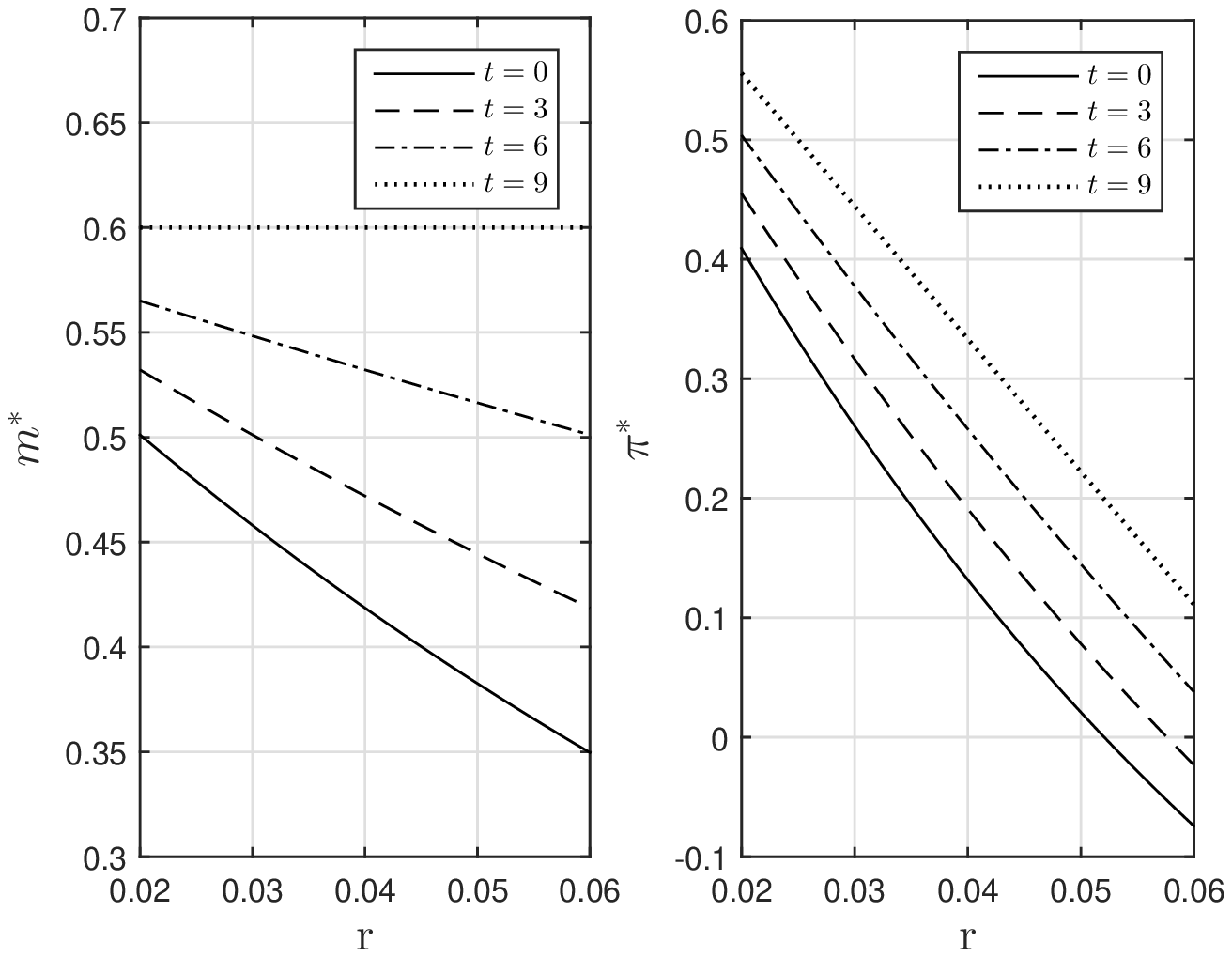}
\\
Figure 1. Impact of $r.$
\end{center}}}
\]
\[
{\parbox[b]{3.5284in}{\begin{center}
\includegraphics[
height=2.6498in,
width=3.5284in
]
{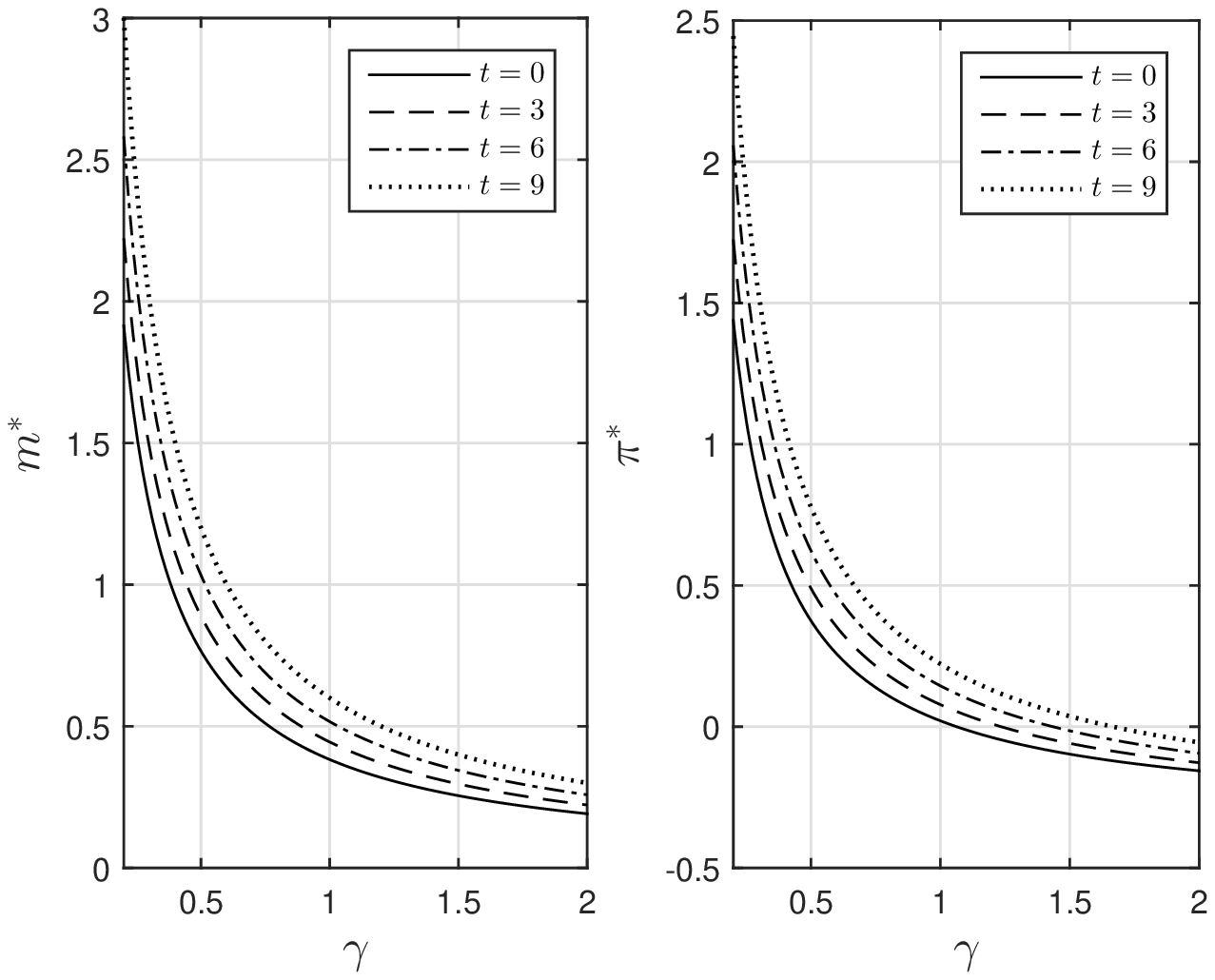}
\\
Figure 2. Impact of $\gamma.$
\end{center}}}
\]

In Figure 1, we plot the impact of $r$ on the reinsurance-investment strategy for a variety of times $t$. Both $m^*$ and $\pi^*$ decrease as the risk-free rate increases, except for $m^*$ when $t=T$, at which it is constant. When large claims occur, the insurer might borrow from the risk-free asset to aid in regaining solvency; recall that the amount invested in the risk-free asset equals $x-\pi^*(t)$, which is negative when the surplus $x$ is negative. Thus, as borrowing money becomes more costly, the insurer retains less insurance risk. Furthermore, it is reasonable for the insurer to decrease the amount invested in the risky asset as the risk-free asset becomes more attractive.

In Figure 2, we plot the impact of $\gamma$ on the reinsurance-investment strategy. Note that, as the insurer becomes more risk averse, it assumes less insurance risk and less financial risk.

\[
{\parbox[b]{3.5284in}{\begin{center}
\includegraphics[
height=2.6498in,
width=3.5284in
]
{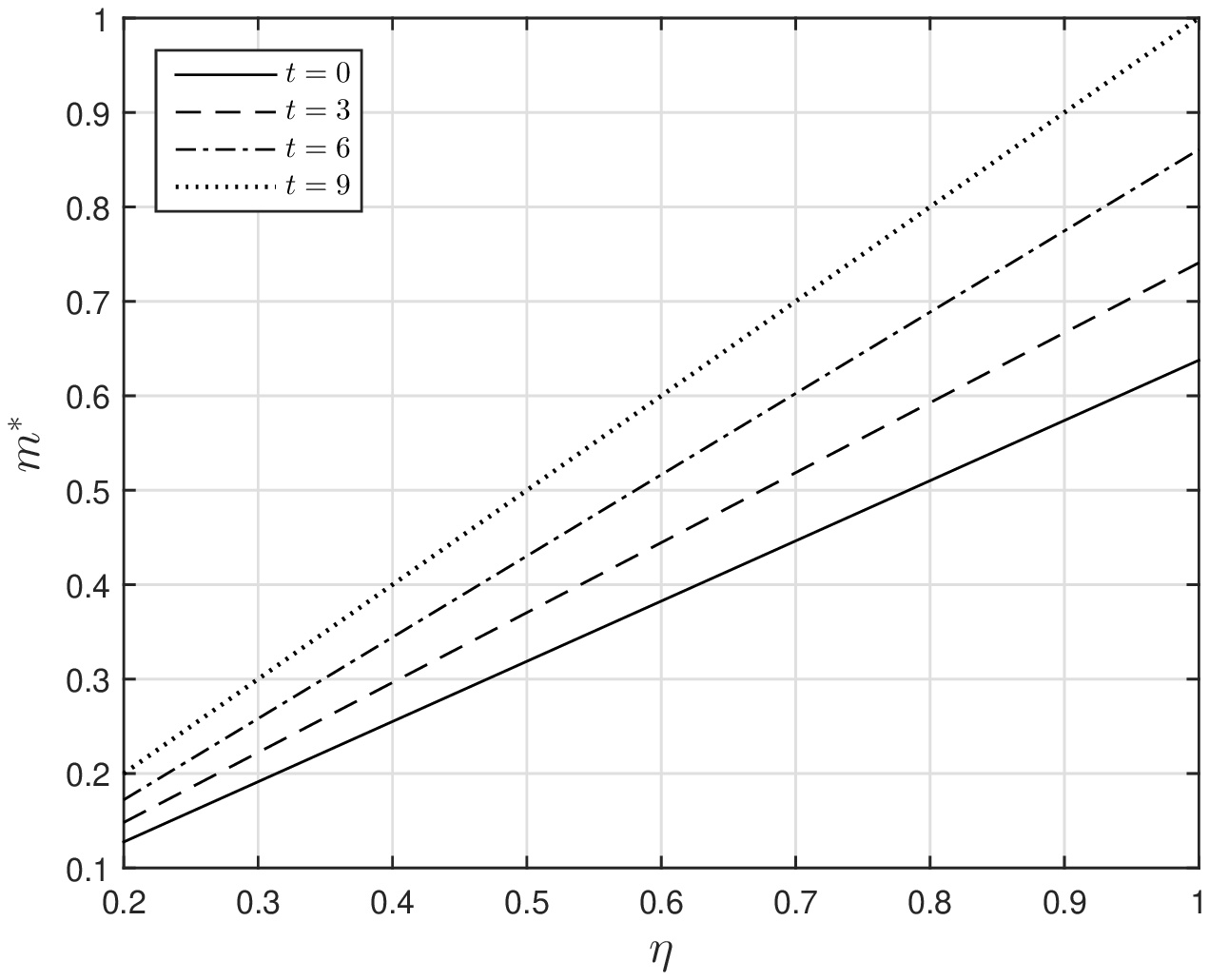}
\\
Figure 3. Impact of $\eta$ on $m^*$.
\end{center}}}
\]
\[
{\parbox[b]{6.9788in}{\begin{center}
\includegraphics[
trim=0.986641in 0.000000in 0.000000in 0.000000in,
height=3.1377in,
width=6.9788in
]
{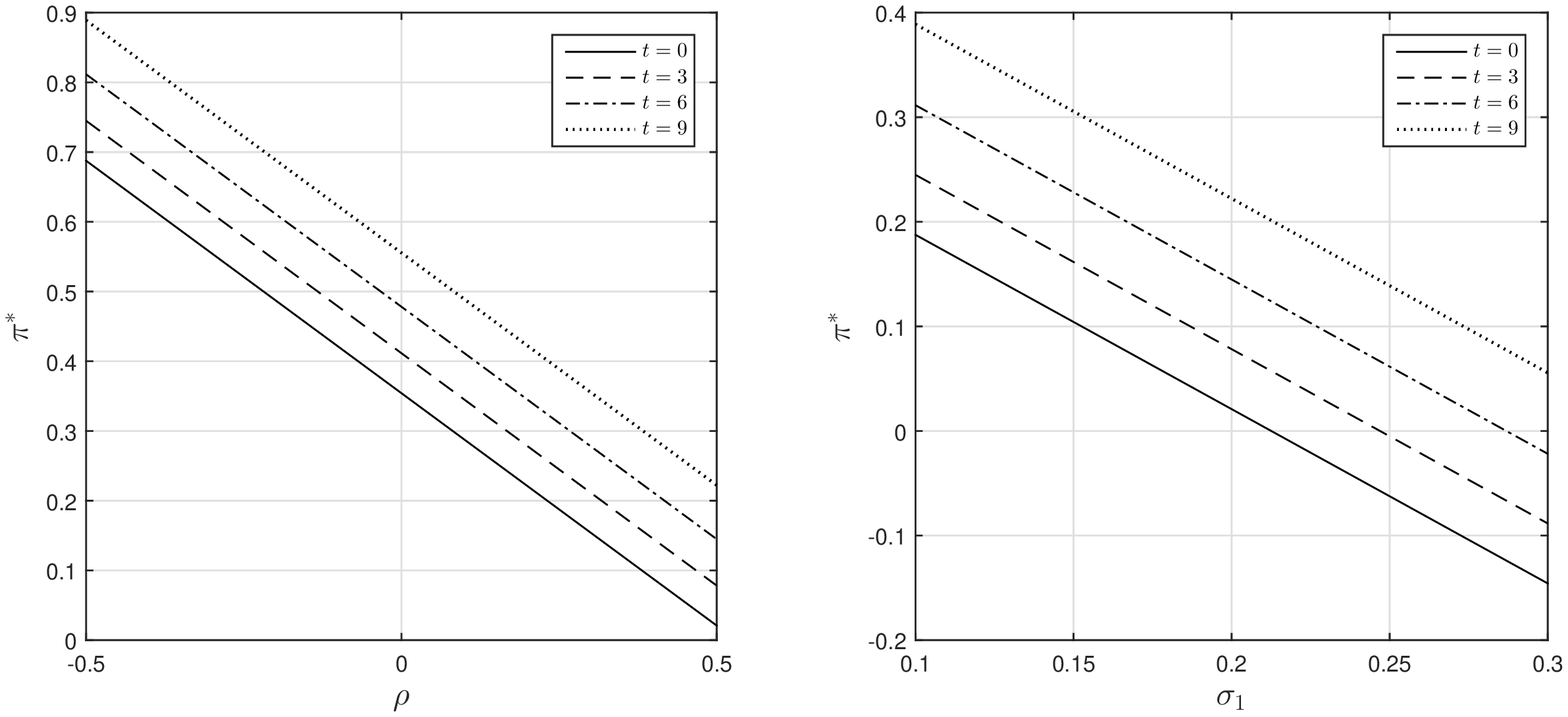}
\\
Figure 4. Impact of $\rho$ $($left$)$ and $\sigma_1$ $($right$)$ on $\pi^*$.
\end{center}}}
\]

In Figure 3, we plot the impact of $\eta$ on the retention level $m^{*}$. Note that $m^{*}$ increases as $\eta$ increases. In other words, as the reinsurance policy becomes more expensive, the insurer retains more insurance risk.

In Figure 4, we plot the impact of $\rho,$ $\sigma_{1}$ on the investment strategy. First, we see from the left panel that as $\rho$ increases, $\pi^*$ decreases. Second, we see from the right panel that, as the insurance market becomes more volatile, the amount invested in the financial market decreases because there is a positive correlation $(\rho = 0.50)$ between the two markets.
\end{example}

\begin{example}
[Proportional vs.\ excess-loss reinsurance] In this example, we assume that the basic surplus process follows the classical Cram\'{e}r-Lundberg model
\[
\mathrm{d}U_t = c\,\mathrm{d}t - \mathrm{d} \sum_{i=1}^{N_{t}}Y_{i}, \quad U_{0}=u,
\]
in which $\left\{ Y_{i} \right\}_{i=1}^{\infty}$ is a sequence of independent and identically distributed exponential random variables with
common survival function $S(y):= \mathrm{e}^{-\kappa y}$ for $y > 0$ representing the amount of individual claims, and $\left\{  N_{t}\right\}_{t\geq0}$ is a Poisson process with intensity $\lambda > 0$ representing the number of claims, independent of $\left\{  Y_{i}\right\}  $. The premium rate $c = (1 + \theta)\frac{\lambda}{\kappa}$. By applying equation \eqref{ovf} with $\nu (\mathrm{d}z)=\lambda F(\mathrm{d}z),$ $\sigma_{1}=0,$ and $\sigma_{2} = \sigma,$ the corresponding value function under the Cram\'{e}r-Lundberg model is given by
\[
V_{1}(x,t) = \mathrm{e}^{r(T-t)}\,x+B_{1}(t),\quad (x,t) \in \mathbb{R} \times [0, T],
\]
in which
\begin{align*}
B_{1}(t)  &  =\int_{t}^{T}\left\{  \frac{1}{2\gamma}\left(  \frac{\mu-r}{\sigma}\right)^{2}+\mathrm{e}^{r(T-s)}\left[  (\theta-\eta
)\lambda\mathbb{E}[Y]+\eta\lambda\int_{0}^{\frac{\eta}{\gamma}\mathrm{e}^{-r(T-s)}}S(y)\,\mathrm{d}y\right]  \right. \\
&  \qquad\qquad\left.  -\gamma\lambda\mathrm{e}^{2r(T-s)}\int_{0}^{\frac{\eta}{\gamma}\mathrm{e}^{-r(T-s)}}yS(y)\,\mathrm{d}y\right\}  \mathrm{d}s.
\end{align*}
Moreover, we have
\[
g_{1}(x,t)=\mathbb{E}_{1}^{x,t}\left[  X_{T}^{u^*}\right]  =\mathrm{e}^{r(T-t)}\,x+b_{1}(t),\quad(x,t)\in\mathbb{R}\times\lbrack0,T],
\]
in which
\[
b_{1}(t)={\int}_{t}^{T}\left\{  \frac{1}{\gamma}\left(  \frac{\mu-r}{\sigma}\right)  ^{2}+\mathrm{e}^{r(T-s)}\left[  (\theta-\eta)\lambda\mathbb{E}
[Y]+\eta\lambda\int_{0}^{\frac{\eta}{\gamma}\mathrm{e}^{-r(T-s)}}S(y)\,\mathrm{d}y\right]  \right\}  \mathrm{d}s,
\]
and $\mathrm{Var}_{1}^{x,t}(X_{T}^{u^*})=\frac{2}{\gamma}\left(g_{1}(x,t) - V_{1}(x,t)\right)$. The other parameter values equal $r=0.05,$ $\mu=0.10,$ $\sigma=0.30,$ $\gamma=0.50,$ $T=3,$ $\theta=0.50,$ $\eta=0.60,$ $\lambda=1,$ and $\kappa=0.50$.

Under this model, we compare the value function $V_{1}$ with the value function under the equilibrium proportional reinsurance $V_{2}$ determined in Zeng et al.\ \cite{ZLL13}, for example. We see from the top panel in Figure 5 that $V_{1}$ dominates $V_{2}$ except at the boundary $t=T,$ where $V_{1}(x,T)=V_{2}(x,T)=x$. In other words, the equilibrium proportional reinsurance policy is demonstrably not optimal within a larger class, that is, the broad class of reinsurance policies $\ell$ for which $0\leq \ell (Z_{t}, t) \leq Z_{t}$. We also see from the bottom panel in Figure 5 that when mean and variance are viewed separately, compared to the equilibrium proportional reinsurance, though the equilibrium excess-loss policy generates a greater terminal mean, the associated terminal risk is also greater.

\[
{\includegraphics[
height=2.6498in,
width=3.5284in
]
{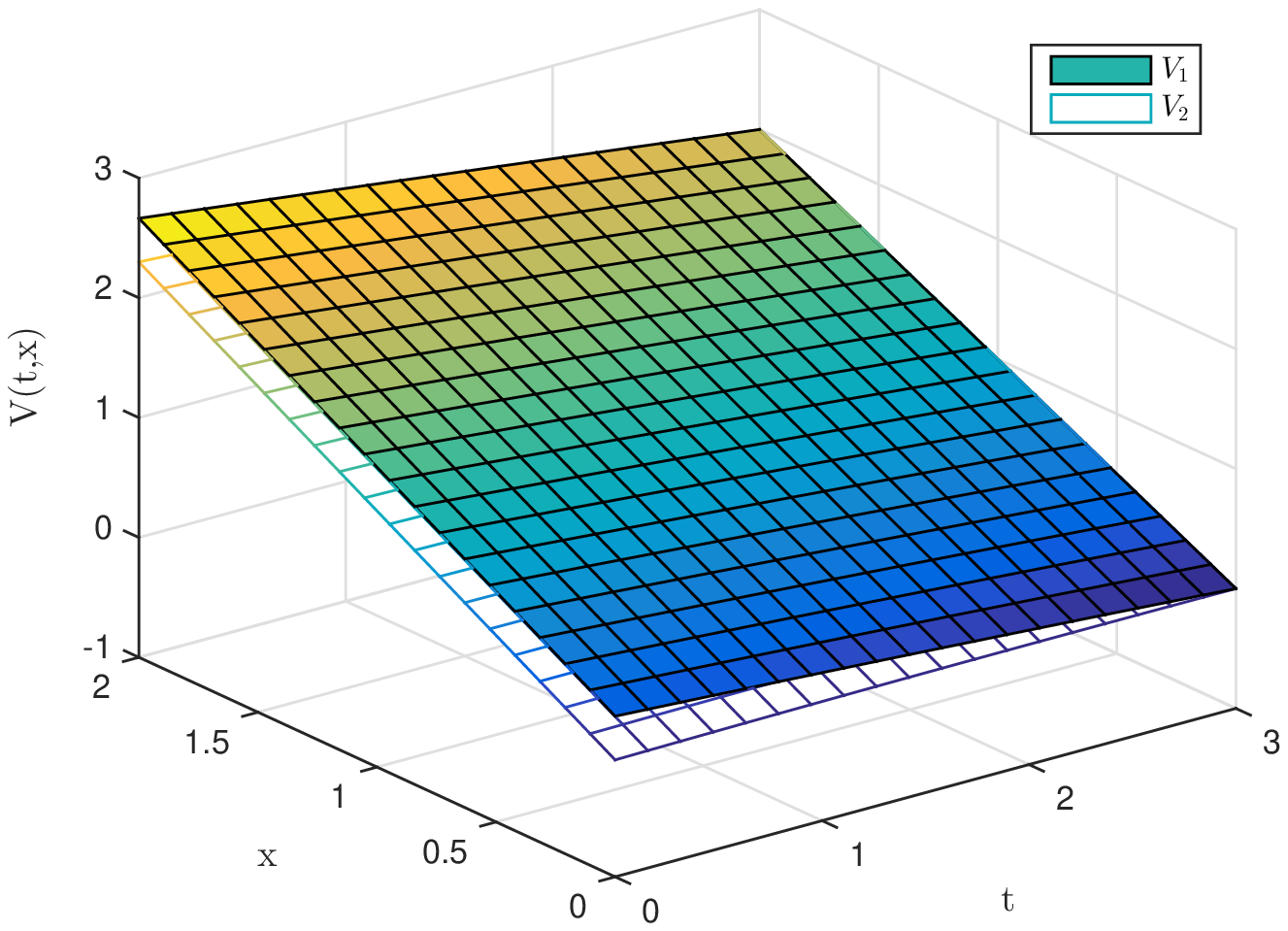}
}
\]

\vspace{-0.8in}
\[
{\parbox[b]{5.3688in}{\begin{center}
\includegraphics[
height=2.7769in,
width=5.3688in
]
{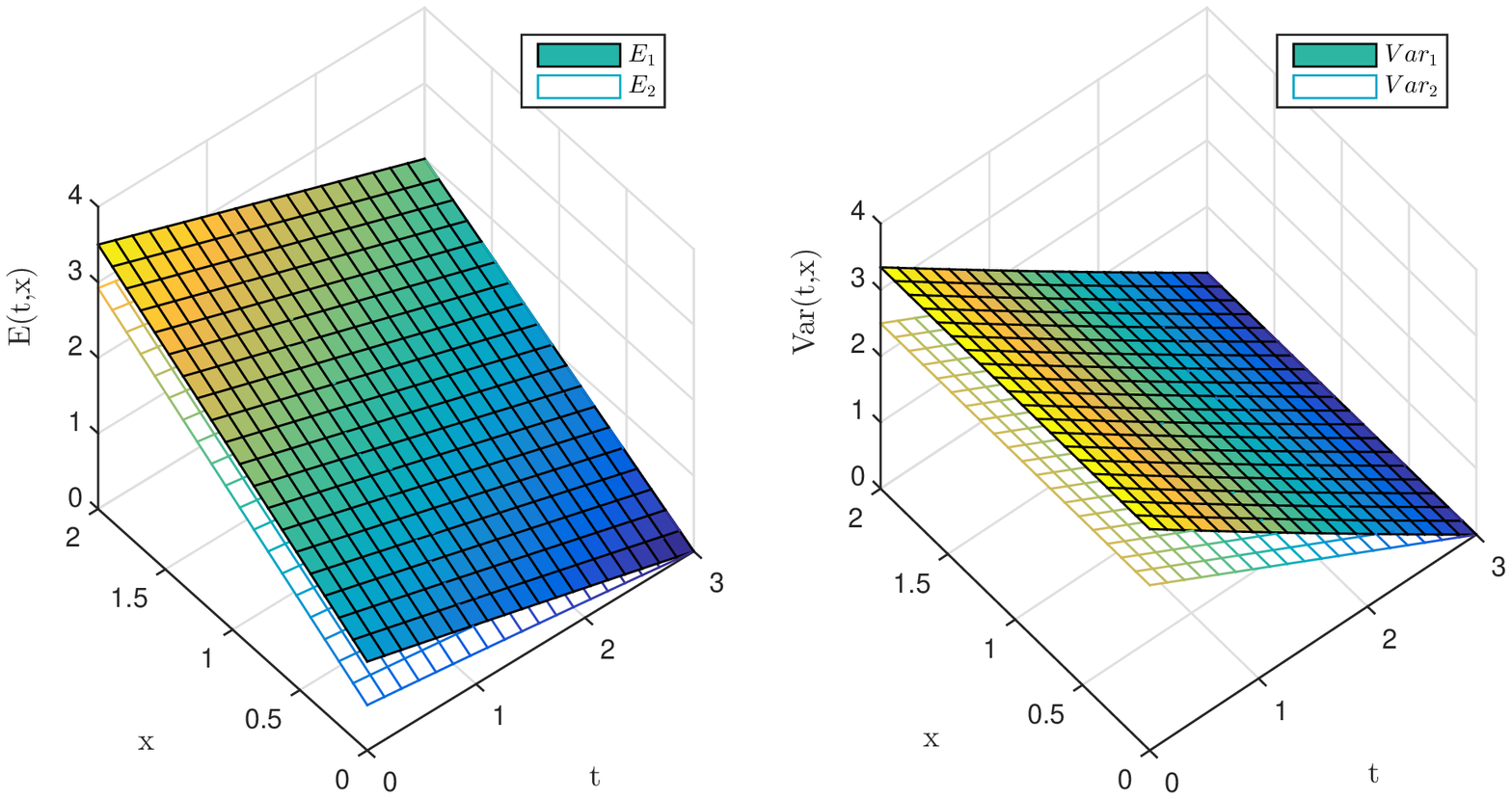}
\\
Figure 5. Excess-loss vs.\ proportional reinsurance.
\end{center}}}
\]

\end{example}

\section{Future research}

In future research, we will extend the work in this paper in two directions.  First, we will allow for the coefficient of absolute risk-aversion to depend on the surplus, as in Bj\"ork et al.\ \cite{BMZ14}.  Risk aversion is generally considered to decrease with wealth, so it would be reasonable to choose $\gamma(x)$ to be a decreasing function of $x$.  One natural choice for $\gamma(x) = \frac{\delta}{x}$, in which $\delta$ is a positive constant, which one can interpret as the {\it relative} risk aversion.

Second, we will consider premium principles other than the expected value premium principle for the reinsurer.  For example, we do not necessarily expect excess-loss reinsurance to be optimal for the standard deviation, variance, or Wang premium principle.

\end{document}